\theoremstyle{definition}
\newtheorem{assumption}{Assumption}
\newtheorem{remark}{Remark}
\theoremstyle{plain}
\newtheorem{theorem}{Theorem}
\newtheorem{lemma}{Lemma}
\title{Conditional Triple Difference-in-Differences}
\author{Dor Leventer\thanks{Leventer: Tel-Aviv University, dorleventer@mail.tau.ac.il. For comments I thank Itay Saporta-Eksten, Oren Danieli, Zak Hirsch, Roee Levy, Michael Knaus. I gratefully acknowledge financial support from The Israel Pollak Fellowship Program for Excellence.}}
\date{\today}
\begin{document}

\maketitle

\begin{abstract}
Triple difference-in-differences designs are widely used to estimate causal effects in empirical work. Surveying the literature, we find that most applications include controls. We show that this standard practice is generally biased for the target causal estimand when covariate distributions differ across groups. To address this, we propose identifying a causal estimand by fixing the covariate distribution to that of one group. We then develop a double-robust estimator and illustrate its application in a canonical policy setting.\\
JEL codes: C21, C23\\
Keywords: Triple difference, controls, causal inference
\end{abstract}

Triple difference-in-differences (TDID) designs are widely used to estimate causal effects, yet their theoretical foundations remain underdeveloped. In a survey of 66 well-cited TDID applications in top journals (see appendix for details), we find that over 70\% include control variables, typically in a linear regression with a three-way interaction. While motivated by standard econometrics textbooks \parencite[e.g.,][]{angrist2009mostly,cunningham2021causal}, this practice lacks formal justification. This paper provides the first theoretical analysis of TDID with controls, highlighting potential biases and proposing a theoretically grounded framework.

We begin by developing a TDID identification framework which incorporates controls for identification.\footnote{Adding controls for precision is discussed in the conclusion.} Our approach synthesizes two strands of the literature. The first, the unconditional TDID framework of \textcite{olden2022triple}, assumes that biases in unconditional parallel trends are invariant across two groups, denoted A and B. The second, the DID framework with controls of \textcite{callaway2021difference,sant2020doubly}, requires that parallel trends hold conditional on covariates. Combining the two, the new identification assumption is that biases in conditional parallel trends are invariant across two groups. We also consider TDID under two scenarios for the treatment assignment mechanism: one in which only eligible units in group A are treated, and another in which eligible units are treated in both groups.

We derive two main identification results. First, we show that the standard approach, estimating a DID with controls separately for each group and differencing the results, fails to recover the target causal estimand when covariate distributions differ across groups. The target estimand, either the average treatment effect on the treated (ATT) for group A or the difference in ATTs between groups A and B, is determined by the treatment assignment mechanism. The intuition is straightforward: each group’s ATT equals its DID plus the average bias in conditional parallel trends. If covariates are distributed differently, the plausible case in most applications, the averaged bias terms do not cancel out, and the (triple) difference in DID no longer identifies the causal estimand.

Second, we show that an alternative approach, taking the difference in conditional DIDs under group A’s covariate distribution, identifies a causal estimand. Depending on the treatment assignment, this target is either the ATT of group A or the expected difference in conditional ATTs between groups A and B, evaluated at group A’s covariate distribution. While the estimand can be viewed as a practical adjustment—identifying what is feasible, it can also be motivated economically, as differences in treatment effects holding the covariate distribution constant may be the estimand of interest in many applications.

For estimation, we develop a double-robust (DR) procedure that extends both parametric and semi-parametric DID methods to the TDID-with-controls setting \parencite{sant2020doubly, chang2020double}. The estimator reweights group B’s DID using the covariate distribution of group A, following the identification result. 
We illustrate this framework using the canonical minimum wage study by \textcite{card1993minimum}. 

To fix ideas, consider the following examples. A first example is \textcite{muralidharan2017cycling}, who study the impact of a bicycle program on girls’ 9th grade enrollment by comparing younger and older cohorts (first difference), boys and girls (second difference), and across two Indian states (third difference). The policy was implemented in only one state and targeted only girls. In our framework, TDID assumes that the bias in parallel trends between genders across cohorts, under the counterfactual without the policy, is equal across states conditional on covariates, such as caste, religion, or access to infrastructure. The standard approach computes a DID within each state and differences the two. This equals the ATT plus a bias term capturing differences in average conditional-trend bias across states. This bias is non-zero when covariate distributions differ across the two states. However, reweighting the control state’s DID to match the covariate distribution in the treated state identifies the ATT. 

Second, \textcite{kleven2019children} examine the impact of parenthood on earnings by comparing individuals before and after childbirth (first difference), early and late age at first birth (second difference), and across mothers and fathers (third difference). Both genders become parents after their first child is born, making both groups treated in the post-childbirth period. A TDID design assumes that the bias in parallel trends in earnings across age-at-first-birth groups, under the counterfactual without children, is equal across genders conditional on covariates, such as education. The standard approach computes a DID separately for mothers and fathers and differences the two, yielding the difference in ATTs between mothers and fathers plus a bias term reflecting differences in average conditional-trend bias between genders.  However, reweighting the DID for fathers to match the covariate distribution of mothers identifies the difference in conditional ATTs, averaged over the mothers' covariate distribution. This estimand is policy-relevant, as it isolates gender differences in the impact of parenthood while holding covariate distributions fixed.

This paper contributes to the econometric literature on TDID and DID. We build on \textcite{olden2022triple}, who formalize TDID as a single bias-in-parallel-trends assumption, and extend recent work on conditional parallel trends in DID settings \parencite{zimmert2018efficient, chang2020double, sant2020doubly}. The paper makes four contributions. First, we lay the theoretical foundations for identification in TDID designs with covariates. Second, we show that the standard TDID empirical strategy is biased when covariate distributions differ across groups. Third, we develop a framework that identifies a causal target estimand, and we clarify this estimand is determined by the treatment assignment mechanism. Lastly, we develop estimators for the identifiable estimand. An accompanying R package for implementation code is available at \href{https://github.com/dorleventer/tdid}{\texttt{tdid}}.

The remainder of the paper is structured as follows. Section \ref{sec:setup} introduces the notation and basic setup. Section \ref{sec:identification} presents the identification framework and results. Section \ref{sec:est} discusses the estimation framework. Section \ref{sec:emp_app} illustrates the results in an empirical application. Section \ref{sec:conlusion} concludes.

\section{Notation and Setup}\label{sec:setup}

We consider a 2-by-2-by-2 TDID setup, where outcomes are observed across two time periods, for units in two treatment cohorts and two distinct groups. TDID designs differ in their treatment assignment mechanisms, which in turn determine the identified causal estimand (discussed below in Section \ref{sec:identification}). Returning to the examples above, in \textcite{muralidharan2017cycling} only units in one group were treated, while in \textcite{kleven2019children} units from both groups were treated.

We now formalize a general framework that captures both cases by explicitly distinguishing between treatment assignment mechanisms. Let $i \in \{1,\dots,n\}$ index individuals observed in periods $t \in \{1,2\}$. To allow for the possibility that only certain units are eligible for treatment, we separate eligibility from treatment. Let $E_i \in \{2, \infty\}$ denote the period in which unit $i$ becomes eligible for treatment, with $E_i = \infty$ for those never eligible. Let $W_{i,t} \in \{0,1\}$ denote treatment status at time $t$, and let $G_i \in \{a,b\}$ denote group membership (e.g., girls and boys, or mothers and fathers). We assume no units are treated in period 1, i.e., $W_{i,1} = 0$ for all $i$. Treatment assignment in period 2 is governed by the following:

\begin{assumption}\label{A.treatment_ass}
Either (i) $W_{i,2} = 1_{{E_i = 2, G_i = a}}$, or (ii) $W_{i,2} = 1_{{E_i = 2}}$.
\end{assumption}

\noindent
Assumption \ref{A.treatment_ass} generalizes treatment assignment structures in TDID designs. The policy in \textcite{muralidharan2017cycling} corresponds to case (i); \textcite{kleven2019children} corresponds to case (ii).

Under the stable unit treatment value assumption (SUTVA) \parencite{rubin1980randomization}, we denote the potential outcome when unit $i$ remains untreated in both periods as $Y_{i,t}(0)$, and when not treated in the first period and treated in the second period as $Y_{i,t}(1)$. Observed and potential outcomes are linked by the consistency assumption, $Y_{i,t} = Y_{i,t}(W_{i,2})$.\footnote{This potential outcome notation follows the 2-by-2 potential outcome notation in \textcite{roth2023s} for DID.} Finally, let $X_i$ denote a set of time-invariant covariates with support $\mathcal{X}$.

\section{Identification}\label{sec:identification}

This section starts with articulating several causal estimands and formulated the TDID with controls framework is formulated. Finally, the two identification results are presented. All proofs are given in Appendix \ref{sec:appendix_proofs}. An analytical example, which numerically shows the results, is presented in Appendix \ref{sec:example}. 

\subsection{Estimands}
Let 
$$ATT(g)=\mathbb{E}[Y_{2}(1)-Y_{2}(0)\mid G=g,E=2]$$
be the ATT of group $g$ at the second time period, and  
$$CATT(g,x)=\mathbb{E}\left[Y_{2}\left(1\right)-Y_{2}\left(0\right)\mid G=g,E=2,X=x\right]$$
be the ATT conditional (CATT) on covariate value $X=x$ of group $g$ at the second time period.
As we show below, the identified causal estimand depends on Assumption \ref{A.treatment_ass}. Under Assumption \ref{A.treatment_ass}(i), the target estimand is $ATT(a)$; under Assumption \ref{A.treatment_ass}(ii), one can target either $ATT(a) - ATT(b)$ or
$$\mathbb{E}[CATT\left(a,X\right)-CATT\left(b,X\right)\mid G=a,E=2].$$
This last estimand captures the difference in conditional treatment effects between groups $a$ and $b$, averaged over the covariate distribution of group $a$. By holding the covariate distribution fixed, it isolates differences in treatment effects.
Returning to the example of \textcite{kleven2019children}, this estimand corresponds to the difference in the conditional effect of parenthood between mothers and fathers, evaluated at the covariate distribution of mothers. In this application, the last estimand may be of particular interest, as it isolates gender-based differences in the effect of parenthood.

\subsection{Identification Assumptions}

The first identification assumption, commonly referred to as no anticipation \parencite{callaway2021difference}, equates the expected value of the untreated and treated potential outcomes in the pre-treatment period. 
\begin{assumption}[Conditional No Anticipation]\label{A.no.ant.1}
For group $g\in\{a,b\}$ and all $x\in\mathcal{X}$
$$\mathbb{E}\left[Y_{1}(0)\mid G=g,E=2,X=x\right]=\mathbb{E}\left[Y_{1}\mid G=g,E=2,X=x\right]$$
\end{assumption}

\noindent
The second component concerns parallel trends. Standard DID with controls assumes a conditional parallel trends condition: the trend in untreated potential outcomes is equal between treated and never-treated units, conditional on covariates. Define the bias in conditional parallel trends for group $g$ as
\begin{align*}
\delta_{\mathrm{CPT}}\left(g,x\right) & =\mathbb{E}\left[Y_{2}\left(0\right)-Y_{1}\left(0\right)\mid G=g,E=2,X=x\right]\\
 & -\mathbb{E}\left[Y_{2}\left(0\right)-Y_{1}\left(0\right)\mid G=g,E=\infty,X=x\right].
\end{align*}
The conditional parallel trends assumption of \textcite{callaway2021difference} corresponds to setting $\delta_{\mathrm{CPT}}(g, x) = 0$ for $g \in \{a, b\}$. In contrast, the TDID framework with covariates allows for non-zero $\delta_{\mathrm{CPT}}(g, x)$, but assumes that the bias is equal across groups.
This is analogous to \textcite{olden2022triple}, who assume equal bias in unconditional parallel trends.
\begin{assumption}[Bias in Conditional Parallel Trends is Group-Invariant]\label{A.same_bias}
For all $x\in\mathcal{X}$, $\delta_\mathrm{CPT}(a,x)=\delta_\mathrm{CPT}(b,x)$.
\end{assumption}
\noindent
To compare or reweight conditional expectations across groups, we also require an assumption often termed overlap.
Let 
$p\left(g,e,x\right)=P\left(\left(G,E\right)=\left(g,e\right)\mid X=x\right)$
denote the probability of belonging to group $g$ with eligibility status $e$ given covariates $X = x$. (\textcite{callaway2021difference} term this a "generalizied" propensity score). We assume the following:
\begin{assumption}[Overlap]\label{A.overlap} For all $x\in\mathcal{X}$, $$p\left(a,2,x\right),p\left(a,\infty,x\right),p\left(b,2,x\right),p\left(b,\infty,x\right)\in(0,1).$$
\end{assumption}

Assumption \ref{A.same_bias} is the key identifying assumption. Since treatment is not randomly assigned, it should be justified using contextual knowledge and economic modeling.
To guide the assessment of its plausibility, we suggest considering two components. First, whether the two groups (e.g., mothers and fathers) select into treatment (e.g., when to have children) based on similar factors. Second, whether, conditional on these factors, the bias in untreated potential outcomes would evolve similarly across groups.
Finally, because the assumption is conditional on covariates, the choice of covariates is critical. The assumption will only hold if $X$ captures the relevant dimensions of selection. Justifying it therefore requires careful consideration of the relationship between the covariates and the institutional setting.

\subsection{Identification Results}

Consider the following empirical strategy: run a DID regression with controls separately for each group, then take the difference. This corresponds to the typical TDID empirical strategy \parencite[see][or the literature survey]{angrist2009mostly,cunningham2021causal}. The result below shows that, without further assumptions, the descriptive estimand\footnote{We differentiate between descriptive estimands, which are functions of observed data, and causal estimands, which depend on both observed data and potential outcomes, following \textcite{abadie2020sampling}.} corresponding to this strategy does not identify the causal parameter of interest. Let
\begin{align*}
D\left(g,x\right) & =\mathbb{E}\left[Y_{2}-Y_{1}\mid G=g,E=2,X=x\right]\\
 & -\mathbb{E}\left[Y_{2}-Y_{1}\mid G=g,E=\infty,X=x\right]
\end{align*}
denote the DID descriptive estimand for group $g\in\{a,b\}$, conditional on covariate 
value $X=x$. We now turn to the first main result. 
\begin{theorem}\label{theorem.1}
Assume Assumption \ref{A.no.ant.1} holds for both $g\in\{a,b\}$, and Assumptions \ref{A.same_bias} and \ref{A.overlap} also hold. Let $\delta_{\mathrm{CPT}}(x)$ be the (group-invariant) bias in conditional parallel trends. Then
\begin{equation*}
    \mathbb{E}[D(a,X)\mid G=a,E=2]-\mathbb{E}[D(b,X)\mid G=b,E=2]=\tau_{\mathrm{T}1}+\mathrm{bias}
\end{equation*}
where 
$$\tau_{\mathrm{T}1}=\begin{cases}
ATT\left(a\right) & \text{under Assumption \ref{A.treatment_ass}(i),}\\
ATT\left(a\right)-ATT\left(b\right) & \text{under Assumption \ref{A.treatment_ass}(ii),}
\end{cases}$$
and 
$$\mathrm{bias}=\mathbb{E}\left[\delta_{\mathrm{CPT}}\left(X\right)\mid G=a,E=2\right]-\mathbb{E}\left[\delta_{\mathrm{CPT}}\left(X\right)\mid G=b,E=2\right].$$
\end{theorem}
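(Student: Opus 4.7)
The plan is to reduce the theorem to a direct decomposition of the conditional DID $D(g,x)$ into a causal piece plus the conditional-trend bias, and then average. The only subtlety is that the first term of $D(g,x)$ depends on whether group $g$ with $E=2$ is actually treated under Assumption~\ref{A.treatment_ass}, so I will handle the two cases in parallel.

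First I would rewrite $D(g,x)$ using the observed/potential-outcome links. For the $E=\infty$ piece consistency immediately gives
$\mathbb{E}[Y_2-Y_1\mid G=g,E=\infty,X=x]=\mathbb{E}[Y_2(0)-Y_1(0)\mid G=g,E=\infty,X=x]$.
For the $E=2$ piece I would apply Assumption~\ref{A.no.ant.1} to replace $Y_1$ by $Y_1(0)$, and then split by the treatment assignment mechanism: in case (i) consistency gives $Y_2=Y_2(1)$ for $g=a,E=2$ and $Y_2=Y_2(0)$ for $g=b,E=2$; in case (ii) $Y_2=Y_2(1)$ for both $g\in\{a,b\}$ with $E=2$. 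Adding and subtracting $\mathbb{E}[Y_2(0)\mid G=g,E=2,X=x]$ then produces
\begin{equation*}
D(g,x)=\mathbf{1}\{\text{$g$ treated}\}\cdot CATT(g,x)+\delta_{\mathrm{CPT}}(g,x),
\end{equation*}
which is the key intermediate identity. Overlap (Assumption~\ref{A.overlap}) is used here to guarantee that all the conditional expectations, and in particular $CATT(g,x)$, are well defined on $\mathcal{X}$.

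Next I would invoke Assumption~\ref{A.same_bias} to replace $\delta_{\mathrm{CPT}}(g,x)$ by the common function $\delta_{\mathrm{CPT}}(x)$, so the identity becomes
\begin{equation*}
D(g,x)=\mathbf{1}\{\text{$g$ treated}\}\cdot CATT(g,x)+\delta_{\mathrm{CPT}}(x).
\end{equation*}
Taking $\mathbb{E}[\,\cdot\mid G=a,E=2]$ of $D(a,X)$ and $\mathbb{E}[\,\cdot\mid G=b,E=2]$ of $D(b,X)$ and subtracting, the $CATT$ averages collapse to $ATT(a)$ (and, in case (ii), to $ATT(a)-ATT(b)$) by the law of iterated expectations, while the $\delta_{\mathrm{CPT}}(X)$ averages reweight under two different conditioning distributions and therefore do not cancel—this residual is precisely $\mathrm{bias}$.

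There is no real obstacle here beyond careful case-splitting; the only place where one must be cautious is keeping the two conditioning distributions ($G=a,E=2$ versus $G=b,E=2$) distinct when averaging $\delta_{\mathrm{CPT}}(X)$, since it is this mismatch—arising exactly when the covariate distributions differ across groups—that produces the non-vanishing bias term and motivates the reweighted estimand developed in the next result.
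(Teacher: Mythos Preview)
Your proposal is correct and follows essentially the same route as the paper. The paper first records the pointwise identity $CATT(g,x)=D(g,x)-\delta_{\mathrm{CPT}}(x)$ for treated groups and $D(b,x)=\delta_{\mathrm{CPT}}(x)$ under Assumption~\ref{A.treatment_ass}(i) (packaged as a preliminary lemma), then takes conditional expectations over $(G=g,E=2)$ and differences; your indicator-based decomposition $D(g,x)=\mathbf{1}\{g\ \text{treated}\}\cdot CATT(g,x)+\delta_{\mathrm{CPT}}(x)$ is the same identity written compactly, and the averaging/differencing step is identical.
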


\noindent
Theorem \ref{theorem.1} shows that the typical TDID regression estimand, the difference between groups of the conditional DIDs averaged over the covariate distribution of its respective group (the left-hand side expression) is equal to a causal estimand plus a bias term. The causal estimand depends on the treatment assignment mechanism: it is $ATT(a)$ under Assumption \ref{A.treatment_ass}(i) or $ATT(a)-ATT(b)$ under Assumption \ref{A.treatment_ass}(ii). The bias term is equal to the difference between the two groups in the average bias calculated of conditional parallel trends within each group.  

The implication is that, under the TDID design, simply differencing DIDs across groups does not identify a causal estimand unless an additional condition holds. Specifically, identification would require that the average bias in conditional parallel trends is equal across groups. This condition holds if the covariate distributions of groups A and B are identical. However, the rationale for Assumption \ref{A.same_bias} is precisely that biases in unconditional parallel trends differ, hence the need to condition on covariates. As a result, if Assumption \ref{A.same_bias} is justified, the bias term in Theorem \ref{theorem.1} is unlikely to be zero, and the causal estimand is not identified by the difference in group-specific DIDs within the conditional TDID framework.

The previous result was negative: it showed that the standard empirical strategy in TDID does not identify a causal estimand. The next result is positive, showing that a well-defined causal estimand is identifiable within the conditional TDID framework.
\begin{theorem}\label{theorem.2}
Assume the conditions in Theorem \ref{theorem.1} hold. Then 
\begin{equation*}
\tau_{\mathrm{T}2} = \mathbb{E}[D(a,X)\mid G=a, E=2]-\mathbb{E}[D(b,X)\mid G=a, E=2],
\end{equation*}
where 
\begin{equation*}
\tau_{\mathrm{T}2}=\begin{cases}
ATT\left(a\right) & \text{under Assumption \ref{A.treatment_ass}(i),}\\
\mathbb{E}[CATT\left(a,X\right)-CATT\left(b,X\right)\mid G=a,E=2]  & \text{under Assumption \ref{A.treatment_ass}(ii).}
\end{cases}
\end{equation*}
\end{theorem}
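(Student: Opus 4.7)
The plan is to reuse the pointwise decomposition of $D(g,x)$ that underlies Theorem \ref{theorem.1}, and then exploit the key structural difference in Theorem \ref{theorem.2}: both terms on the right-hand side are integrated against the \emph{same} distribution, $X \mid G=a, E=2$. This shared integrator is what forces the $\delta_{\mathrm{CPT}}(X)$ terms to cancel pointwise, rather than only in a group-averaged sense as in Theorem \ref{theorem.1}.

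First I would establish, separately for each case of Assumption \ref{A.treatment_ass}, the pointwise identity for $D(g,x)$. Using consistency $Y_{t}=Y_{t}(W_{i,2})$ together with Assumption \ref{A.no.ant.1} to replace observed pre-period outcomes for the $E=2$ arms by their untreated counterparts, and then adding and subtracting $\mathbb{E}[Y_{2}(0)\mid G=g, E=2, X=x]$ on the treated side, I obtain: under case (i), $D(a,x) = CATT(a,x) + \delta_{\mathrm{CPT}}(a,x)$, while group $b$ is never treated so $Y_{t}=Y_{t}(0)$ throughout and $D(b,x) = \delta_{\mathrm{CPT}}(b,x)$; under case (ii), the $E=2$ units in both groups are treated, giving $D(g,x) = CATT(g,x) + \delta_{\mathrm{CPT}}(g,x)$ for both $g\in\{a,b\}$.

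Second, I would apply Assumption \ref{A.same_bias} to collapse $\delta_{\mathrm{CPT}}(a,x) = \delta_{\mathrm{CPT}}(b,x) = \delta_{\mathrm{CPT}}(x)$. The bias then cancels at every $x$: $D(a,x) - D(b,x) = CATT(a,x)$ under (i), and $D(a,x) - D(b,x) = CATT(a,x) - CATT(b,x)$ under (ii). Taking expectation with respect to $X\mid G=a, E=2$ and invoking the law of iterated expectations delivers $ATT(a)$ in case (i) and $\mathbb{E}[CATT(a,X) - CATT(b,X)\mid G=a, E=2]$ in case (ii). Assumption \ref{A.overlap} is what makes the integration legitimate: positivity of $p(b,2,x)$ and $p(b,\infty,x)$ on $\mathcal{X}$ ensures that $D(b,x)$ is well-defined wherever the integrating measure $X\mid G=a, E=2$ places mass.

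The main obstacle is conceptual rather than technical. In Theorem \ref{theorem.1} the two bias terms appeared integrated against different distributions, $X\mid G=a, E=2$ versus $X\mid G=b, E=2$, and only cancel when those distributions coincide. Reweighting $D(b,X)$ to the treated group's covariate distribution is precisely what converts that (generally incomplete) averaged cancellation into an exact pointwise one. Beyond recognizing this, the remaining work is careful bookkeeping of which potential outcome corresponds to each observed outcome under the two cases of Assumption \ref{A.treatment_ass}, which is routine.
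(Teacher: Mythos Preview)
Your proposal is correct and follows essentially the same route as the paper: the paper establishes the pointwise identities $D(g,x)=CATT(g,x)+\delta_{\mathrm{CPT}}(x)$ (and $D(b,x)=\delta_{\mathrm{CPT}}(x)$ in case (i)) in its Lemma \ref{lemma.catt_diff}, then integrates against $X\mid G=a,E=2$ and differences, exactly as you do. The only cosmetic difference is that the paper integrates each term separately before differencing, whereas you difference pointwise first and then integrate; linearity of expectation makes these equivalent.
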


\noindent
Theorem \ref{theorem.2} shows that a causal estimand is identified by taking the difference between the DID with controls for group A and a reweighted DID for group B, where group B is weighted to match the covariate distribution of group A. This result requires no additional assumptions beyond those already stated.

\section{Estimation}\label{sec:est}

The identification result in Theorem \ref{theorem.2} motivates estimation of the two expectations $\mathbb{E}[D(a,X)\mid G=a, E=2]$ and $\mathbb{E}[D(b,X)\mid G=a, E=2]$. We begin by establishing equality between these terms and doubly robust (DR) descriptive estimands, building on \textcite{sant2020doubly}. We then discuss an estimation procedure and inference. A simulation study is provided in Appendix \ref{simulation}.

First, we define weights and outcome regressions. 
\begin{align*}w_{\mathrm{T}}\left(g,e\right) & =\frac{1_{\left\{ G_{i}=g,E_{i}=e\right\} }}{\mathbb{E}\left[1_{\left\{ G=g,E=e\right\} }\right]},\\
w_{\mathrm{C}}\left(g,g^{\prime},e,e^{\prime},x\right)= & \frac{1_{\left\{ G_{i}=g^{\prime},E_{i}=e^{\prime}\right\} }}{\mathbb{E}\left[1_{\left\{ G=g,E=e\right\} }\right]}\frac{p\left(g,e,x\right)}{p\left(g^{\prime},e^{\prime},x\right)},\\
m\left(g,e,x\right)= & \mathbb{E}\left[Y_2-Y_1\mid G=g,E=e,X=x\right].
\end{align*}
Next, we define score functions for outcome regression (OR), inverse probability weighted (IPW) and DR.
\begin{align*}
\psi_{OR}\left(g;O_{i},\theta\right) & =w_{\mathrm{T}}\left(g,2\right)\left(Y_{i,2}-Y_{i,1}\right)-w_{\mathrm{T}}\left(g,2\right)m\left(g,\infty,X_{i}\right),\\
\psi_{IPW}\left(g;O_{i},\theta\right) & =\left(w_{\mathrm{C}}\left(g,g,2,2,X_{i}\right)-w_{\mathrm{C}}\left(g,g,2,\infty,X_{i}\right)\right)\left(Y_{i,2}-Y_{i,1}\right),\\
\psi_{DR}\left(g;O_{i},\theta\right) & =\psi_{IPW}\left(g;O_{i},\theta\right)+\left(w_{\mathrm{T}}\left(g,2\right)-w_{\mathrm{C}}\left(g,g,2,2,X_{i}\right)\right)m\left(g,2,X_{i}\right)\\
 & -\left(w_{\mathrm{T}}\left(g,2\right)-w_{\mathrm{C}}\left(g,g,2,\infty,X_{i}\right)\right)m\left(g,\infty,X_{i}\right),
\end{align*}
where $O_i=(Y_{i,2},Y_{i,1},E_i,G_i,X_i)$ is the observed data and $\theta$ are the nuisance parameters, which in this context are the four generalized propensity scores, $p\left(a,2,x\right),p\left(a,\infty,x\right),p\left(b,2,x\right)$ and $p\left(b,\infty,x\right)$, and three outcome regressions, $m\left(a,\infty,x\right),m\left(b,2,x\right)$ and $m\left(b,\infty,x\right)$. From \textcite{sant2020doubly,callaway2021difference} or slight configuration of the lemma below, it follows that under the TDID design (abusing notation by omitting $O_i$ and $\theta$ for compactness)
\begin{equation}\label{eq:iden_did}
    \mathbb{E}\left[D\left(a,X\right)\mid G=a,E=2\right]=\mathbb{E}\left[\psi_{OR}\left(a\right)\right]=\mathbb{E}\left[\psi_{IPW}\left(a\right)\right]=\mathbb{E}\left[\psi_{DR}\left(a\right)\right].
\end{equation}
After establishing identification of the first expectation using a DR approach, we now turn to the second term, $\mathbb{E}[D(b,X)\mid G=a,E=2]$. To differentiate between the score functions, we add a W (short for weighted) to the notation. Let 
\begin{align*}
\psi_{WOR}\left(O_{i},\theta\right) & =w_{\mathrm{T}}\left(a,2\right)m\left(b,2,X_{i}\right)-w_{\mathrm{T}}\left(a,2\right)m\left(b,\infty,X_{i}\right),\\
\psi_{WIPW}\left(O_{i},\theta\right) & =\left(w_{\mathrm{C}}\left(a,b,2,2,X_{i}\right)-w_{\mathrm{C}}\left(a,b,2,\infty,X_{i}\right)\right)\left(Y_{i,2}-Y_{i,1}\right),\\
\psi_{WDR}\left(O_{i},\theta\right) & =\psi_{WIPW}\left(O_{i},\theta\right)+\left(w_{\mathrm{T}}\left(a,2\right)-w_{\mathrm{C}}\left(a,b,2,2,X_{i}\right)\right)m\left(b,2,X_{i}\right)\\
 & -\left(w_{\mathrm{T}}\left(a,2\right)-w_{\mathrm{C}}\left(a,b,2,\infty,X_{i}\right)\right)m\left(b,\infty,X_{i}\right).
\end{align*}
We formally establish identification in the following lemma (where we again abuse notation by omitting $O_i$ and $\theta$). 

\begin{lemma}\label{lemma.wdr}
Assume the conditions in Theorem \ref{theorem.1} hold. Then 
\begin{align*}
\mathbb{E}\left[D\left(b,X\right)\mid G=a,E=2\right]&=\mathbb{E}\left[\psi_{WOR}\right]=\mathbb{E}\left[\psi_{WIPW}\right]=\mathbb{E}\left[\psi_{WDR}\right].
\end{align*}
\end{lemma}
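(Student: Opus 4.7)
The plan is to establish the three equalities by direct algebraic manipulation of expectations, mirroring the doubly robust DID arguments of \textcite{sant2020doubly}, but adapted to the fact that here the target expectation is taken with respect to group $a$'s covariate distribution while the underlying DID components $m(b,2,x)$ and $m(b,\infty,x)$ are estimated on group $b$. The key identity that will be used repeatedly is
\begin{equation*}
\mathbb{E}[w_{\mathrm{T}}(a,2) f(X)] = \mathbb{E}[f(X) \mid G=a, E=2]
\end{equation*}
for any integrable $f(X)$, which follows immediately from the definition of $w_{\mathrm{T}}(a,2)$.

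First I would tackle the outcome regression representation. Applying the identity above with $f(X) = m(b,2,X) - m(b,\infty,X)$, and noting that by definition $D(b,x) = m(b,2,x) - m(b,\infty,x)$, one gets $\mathbb{E}[\psi_{WOR}] = \mathbb{E}[D(b,X) \mid G=a, E=2]$ almost immediately. Next, I would verify the WIPW representation by splitting $\psi_{WIPW}$ into its two summands and treating them symmetrically. For the first summand, I would apply the law of iterated expectations conditional on $X$, use $\mathbb{E}[1\{G=b, E=2\}(Y_2-Y_1) \mid X] = p(b,2,X) m(b,2,X)$, cancel $p(b,2,X)$ between numerator and denominator (this is where Assumption \ref{A.overlap} is invoked), and then apply the WOR identity again to turn $\mathbb{E}[p(a,2,X) m(b,2,X)]/P(G=a,E=2)$ into $\mathbb{E}[m(b,2,X) \mid G=a, E=2]$. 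The second summand is handled identically with $e=\infty$ in place of $e=2$, yielding $\mathbb{E}[\psi_{WIPW}] = \mathbb{E}[D(b,X) \mid G=a, E=2]$.

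For the doubly robust representation, I would rearrange $\psi_{WDR}$ into the form
\begin{equation*}
\psi_{WDR} = \psi_{WIPW} + \psi_{WOR} - w_{\mathrm{C}}(a,b,2,2,X) m(b,2,X) + w_{\mathrm{C}}(a,b,2,\infty,X) m(b,\infty,X),
\end{equation*}
and then show that the two residual terms have expectations that cancel $\mathbb{E}[\psi_{WOR}]$. Specifically, applying LIE to $\mathbb{E}[w_{\mathrm{C}}(a,b,2,e,X) m(b,e,X)]$ and using $\mathbb{E}[1\{G=b,E=e\} \mid X] = p(b,e,X)$ gives $\mathbb{E}[m(b,e,X) \mid G=a, E=2]$ for each $e \in \{2, \infty\}$, so the residual equals $-\mathbb{E}[\psi_{WOR}]$. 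This yields $\mathbb{E}[\psi_{WDR}] = \mathbb{E}[\psi_{WIPW}]$ and closes the chain.

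The main obstacle is not conceptual but bookkeeping: the ratio of generalized propensity scores in $w_{\mathrm{C}}(a,b,\cdot,\cdot,X)$ makes it easy to mis-route which indicator is being integrated against which propensity score. The cleanest way to avoid this is to encapsulate once and for all the lemma that $\mathbb{E}[w_{\mathrm{C}}(a,b,2,e,X) f(X)] = \mathbb{E}[w_{\mathrm{T}}(a,2) f(X)]$ for any $f$ of $X$ alone (a direct consequence of LIE and cancellation of $p(b,e,X)$), and apply it mechanically in Steps 2 and 3. Note that Assumptions \ref{A.no.ant.1} and \ref{A.same_bias} play no role here, because the lemma is a distributional identity about the descriptive estimand $D(b,X)$; the translation of this estimand into a causal one is what Theorem \ref{theorem.2} handles.
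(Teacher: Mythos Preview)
Your proposal is correct and follows essentially the same approach as the paper's proof. The paper organizes the argument around the helper quantity $\Psi(e)=\mathbb{E}\big[m(b,e,X)\mid G=a,E=2\big]$ and proves once that $\Psi(e)=\mathbb{E}[w_{\mathrm{T}}(a,2)m(b,e,X)]=\mathbb{E}[w_{\mathrm{C}}(a,b,2,e,X)(Y_2-Y_1)]$, then differences over $e\in\{2,\infty\}$; you instead work directly with the score functions and encapsulate the reweighting identity $\mathbb{E}[w_{\mathrm{C}}(a,b,2,e,X)f(X)]=\mathbb{E}[w_{\mathrm{T}}(a,2)f(X)]$ as a sub-lemma, but the underlying LIE and propensity-score-cancellation steps are identical, and your observation that Assumptions~\ref{A.no.ant.1} and~\ref{A.same_bias} are not used here is accurate.
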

\noindent The proof is given in Appendix \ref{sec:appendix_proofs}, and loosely follows the structure of the proof in \textcite{callaway2021difference}. To our knowledge, this is a novel result establishing double-robust identification for a reweighted DID estimand in a TDID setting. It enables identification of TDID estimands under group-specific reweighting and is key to estimating the causal effect defined in Theorem \ref{theorem.2}. Specifically, combining \eqref{eq:iden_did}, Lemma \ref{lemma.wdr}, and Theorem \ref{theorem.2} implies that $\tau_{\mathrm{T}2}=\mathbb{E}\big[\psi_{\mathrm{DR}}(a)\big]-\mathbb{E}\big[\psi_{\mathrm{WDR}}\big]$.

Let $\widehat{\tau}_{\mathrm{T}2}$ be an estimator for $\tau_{\mathrm{T}2}$, for which we discuss a procedure below. Inference hinges on a central limit theorem $\sqrt{n}(\widehat{\tau}_{\mathrm{T}2}-\tau_{\mathrm{T}2})\rightarrow N(0,V)$. 
This can be achieved by two different approaches on estimated nuisance parameters, denoted $\widehat{\theta}$. First, as discussed in \textcite{sant2020doubly}, one can use parametric models. If these models are correctly specified and regularity conditions hold asymptotic normality of $\widehat{\tau}_{\mathrm{T}2}$ follows. For this approach, using ordinary least squares (OLS) to model the outcome regression ($m(.)$) and logistic regression to model the propensity score ($p(.)$) are straightforward choices. Second, as discussed in \textcite{chang2020double}, one can implement non-parametric models in a de-biased machine learning framework \parencite{chernozhukov2018double}. This requires algorithms with sufficient convergence rates and the use of cross fitting.

For a given choice of models to estimate the nuisance parameters, estimation can be carried out as follows. First, estimate $\widehat{\theta}$. Second, estimate $\widehat{\psi}_{\mathrm{DR}}(a;O_i,\widehat{\theta})$ and $\widehat{\psi}_{\mathrm{WDR}}(O_i,\widehat{\theta})$, which are estimators of $\psi_{\mathrm{DR}}(a;O_i,\theta)$ and $\psi_{\mathrm{WDR}}(O_i,\theta)$, respectively, constructed by substituting parameters in $\theta$ with fitted models in $\widehat{\theta}$ and expectations with sample means. Third, calculate
$\widehat{\tau}_{\mathrm{T}2}=\mathbb{E}_n\big[\widehat{\psi}_{\mathrm{DR}}(a;O_i,\widehat{\theta})-\widehat{\psi}_{\mathrm{WDR}}(O_i,\widehat{\theta})\big]$, where $\mathbb{E}_n[.]$ denotes the sample mean.  Finally, estimate the asymptotic variance of $\widehat{\tau}_{\mathrm{T}2}$ using $\widehat{V}=\frac{1}{n}\sum_i\eta \left(O_{i},\widehat{\theta},\widehat{\tau}_{\mathrm{T}2}\right)^2$, where $$\eta(O_{i},\theta,\tau) = \psi_{\mathrm{DR}}(a;O_i,\theta)-\psi_{\mathrm{WDR}}(O_i,\theta) - w_{\mathrm{T}}(a,2)\tau_{\mathrm{T}2},$$
and calculate the standard errors of the estimator using $\sqrt{\widehat{V}/n}$.

\section{Empirical Application}\label{sec:emp_app}

We illustrate our framework using the canonical minimum wage study by \textcite{card1993minimum}.\footnote{This application is also discussed in \textcite[chapter 9.5]{cunningham2021causal} as a TDID example.} In November 1992, New Jersey raised its minimum wage from \$4.25 to \$5.05, while Pennsylvania’s remained unchanged. The original analysis used a two-by-two DID design comparing employment in restaurants before and after the reform across the two states.

We revisit the study to ask a different question: did the reform have a different impact on low- vs. high-wage restaurants? We illustrate how this question can be addressed using the TDID design. Since both groups are treated if located in New Jersey in the post-reform period, this aligns with Assumption~\ref{A.treatment_ass}(ii). Suppose that average potential employment under the lower minimum wage would have followed different trends in New Jersey and Pennsylvania—that is, standard DID fails due to bias in parallel trends. The TDID identification assumption is that this bias is constant across wage groups, conditional on covariates $X$ (Assumption~\ref{A.same_bias}). Because wage groups may differ systematically, this conditional version is more plausible than an unconditional assumption. The resulting estimand is the difference in conditional ATTs, averaged over the covariate distribution of low-wage restaurants ($\tau_{\mathrm{T}2}$ in Theorem~\ref{theorem.2}), which directly adresses the research question.

Let $Y_{i,t}$ denote employment at restaurant $i$ in period $t \in \{\text{pre}, \text{post}\}$, and let $E_i \in \{\text{NJ}, \text{PA}\}$ indicate the state. Restaurants are grouped based on their pre-treatment average starting wage: group A includes those at or below the median (\$4.50), and group B includes those above it. Let $G_i \in \{a, b\}$ denote this group indicator. Treatment assignment follows $W_{i,\text{pre}} = 0$ and $W_{i,\text{post}} = 1_{\{E_i = \text{NJ}\}}$. Covariates $X_i$ include soda price, number of managers, and opening hour, measured in the pre-treatment period. After dropping observations with missing values, the sample size is 695.

We implement the following estimators. First, we estimate DID and TDID using standard linear regressions, both with and without controls, based on two-way interactions for DID and three-way interactions for TDID.\footnote{For completeness, we write the regressions explicitly. The (group-specific) DID regression is 
$$g\in\{a,b\},\quad Y_{i,t}=\alpha_{0}^g+\alpha_{1}^g1_{\{E_{i}=\text{NJ}\}}+\alpha_{2}^g1_{\{t=\text{post}\}}+\alpha_{3}^g1_{\{E_{i}=\text{NJ},t=\text{post}\}}+\phi^gX_{i}+u_{i}$$
with target coefficients $\alpha^a_3$ and $\alpha^b_3$, and the TDID regression is 
\begin{align*}
Y_{i,t} & =\beta_{0}+\beta_{1}1_{\{E_{i}=\text{NJ}\}}+\beta_{2}1_{\{t=\text{post}\}}+\beta_{3}1_{\left\{ G_{i}=a\right\} }\\
 & +\beta_{4}1_{\{E_{i}=\text{NJ},t=\text{post}\}}+\beta_{5}1_{\{E_{i}=\text{NJ},G_{i}=a\}}+\beta_{6}1_{\left\{ t=\text{post},G_{i}=a\right\} }\\
 & +\beta_{7}1_{\{E_{i}=\text{NJ},t=\text{post},G_{i}=a\}}+\varphi X_{i}+\varepsilon_{i}
\end{align*}
with target cofficient $\beta_7$. A regression "without controls" means omitting $X_i$ from the above equations.  
} Controls are added linearly with no interactions. Second, we apply an OR approach, since it transparently illustrates covariate reweighting.
Specifically, we estimate separate OLS outcome models of $Y$ on covariates $X$ for each combination of time period ($t$), state ($E$), and group ($G$), yielding eight models in total. To compute the DID for group A, we use the four models trained on group A to predict outcomes for treated units of group A under all $(t, W)$ combinations, calculate unit-level treatment effects, and average them. This corresponds to an estimator of $\mathbb{E}\big[\psi_{OR}(a)\big]$ (Section \ref{sec:est}). We repeat the same procedure for group B and estimate $\mathbb{E}\big[\psi_{OR}(b)\big]$. To estimate the weighted DID for group B under the covariate distribution of group A, we predict outcomes for treated units from group A using the models trained on group B, and calculate the DID using these predictions. This corresponds to an estimator of $\mathbb{E}\big[\psi_{WOR}\big]$.

The results, presented in Table~\ref{tab:tdid_results}, illustrate differences between conventional estimators and those motivated by the identification framework. In the OLS specifications, the triple difference (column 4) is modest, 1.35 without controls and 1.56 with controls. The OR estimator without controls yields similar conclusions. When controls are included, differencing the DID of group A (column 1) from that of group B (column 2) produces a larger estimate of 5.42. However, this estimator does not recover a valid causal estimand (Theorem~\ref{theorem.1}).
In contrast, differencing the DID of group A from the weighted DID of group B (column 3), which applies group A’s covariates to group B’s outcome model, yields an estimate of 4.23 (column 5). This estimator is unbiased (identification-wise) for the average difference in conditional ATTs under the covariate distribution of group A (Theorem~\ref{theorem.2}). However, none of the estimates are statistically significant, likely due to the small sample size, and should be viewed as illustrative rather than conclusive.

\begin{table}[t]
\centering
\caption{Application Results}
\label{tab:tdid_results}
\begin{tabular}[t]{lccccc}
\toprule
Method & DID A & DID B & W DID B & A - B & A - WB\\
& (1) & (2) & (3) & (4) & (5) \\
\midrule
OLS no Controls & 2.84 & 1.49 &  & 1.35 & \\
 & (2.38) & (2.67) &  & (3.57) & \\
OLS w/ Controls & 3.08 & 1.52 &  & 1.56 & \\
 & (2.04) & (2.27) &  & (3.05) & \\
OR no Controls & 2.72 & 1.47 &  & 1.25 & \\
 & (2.60) & (2.76) &  & (3.66) & \\
OR w/ Controls & 3.76 & -1.67 & -0.47 & 5.42 & 4.23\\
 & (3.37) & (4.40) & (3.52) & (5.51) & (4.86)\\
\bottomrule
\end{tabular}
\caption*{
\footnotesize \textit{Notes:} The table presents estimates on the effect of minimum wage reform on  employment, using replication data from \textcite{card1993minimum}. The outcome variable is full-time employment. Group A are restaurants with starting wages smaller than or equal to median starting wages in the pre-treatment period, and group B are restaurants with starting wages greater than the median starting wages in the pre-treatment period. "DID A" and "DID B" refer to DID estimates of group A and B, respectively. "W DID B" refers to weighted DID estimates of group B according to covariate values of group A. "A - B" refers to the difference between DID A and DID B. "A - WB" refers to the difference between DID A and W DID B. "OLS" refers to a conventional linear regression specification, and "OR" refers to an outcome regression method using linear regression. "no Controls" and "w/ Controls" refers to a specification without and with controls, respectively. Standard errors are reported in parentheses. For OR specifications, bootstrap standard errors are reported.
}
\end{table}

\section{Conclusion}\label{sec:conlusion}

This paper provides a theoretical analysis of the triple difference (TDID) design with time-invariant covariates. Our main contributions are as follows. First, we lay the theoretical foundations for identification in TDID designs when conditioning on covariates. Second, we show that the standard TDID empirical strategy can be biased if covariate distributions differ across groups. Third, we show that re-weighed by the covariates of one group identifies a well-defined causal estimand. Lastly, we construct double-robust estimators that recover this estimand under standard assumptions.

While our setup focuses on a simple 2-by-2-by-2 panel structure, several natural extensions are possible. One is to allow for multiple time periods, in the spirit of \textcite{callaway2021difference}. Another is to adapt the framework to repeated cross-section data. Additional extensions could draw on recent developments in the DID literature, including time-varying covariates \parencite{caetano2022difference}, continuous treatments \parencite{callaway2024difference}, and settings with compositional changes \parencite{sant2023difference}.
Our results underscore that TDID presents distinct identification challenges relative to standard DID, motivating further research on TDID designs.

We finish with several remarks of practical relevance. 

\begin{remark}[Differencing two DR estimators]
Theorem \ref{theorem.1} shows that calculating a separate DR estimator for group A à la \textcite{sant2020doubly}, and then a separate DR estimator for group B and differencing them, is biased for the target causal estimand if the covariate distributions differ between groups. 
\end{remark}

\begin{remark}[Testing for bias]
Under Assumption~\ref{A.treatment_ass}(i), the bias in conditional parallel trends ($\delta_{\mathrm{CPT}}(X)$ in Theorem~\ref{theorem.1}) is identifiable (see Lemma~\ref{lemma.catt_diff} in the Appendix). This allows researchers to test whether the bias is non-zero in conventional TDID estimators. Moreover, if the bias is found to be significant, one can apply a bias-correction procedure to recover the ATT.
\footnote{Lemma~\ref{lemma.catt_diff} shows that the DID for group B identifies the (group-invariant) bias in conditional parallel trends. Therefore, one can model the DID for group B, predict its mean under each group's covariate distribution, and take their difference to quantify the bias in Theorem~\ref{theorem.1}. Subtracting this quantity yields a bias-corrected estimator of the ATT. Note that since this also recovers $ATT(a)$, it is equivalent--in terms of identification--to our proposed estimator.}
\end{remark}

\begin{remark}[Adding covariates for precision]
Researchers often include covariates to improve precision. If the covariates only affect precision, that is, they do not induce heterogeneity in the bias of conditional parallel trends, then the bias term in Theorem~\ref{theorem.1} cancels out across groups. In this case, the difference in group-specific DID estimators with covariates still identifies the target causal estimand. However, if the covariates both improve precision and introduce heterogeneity in the bias of parallel trends, then potentially they also introduce bias.
\end{remark}

\begin{remark}[Comparing regressions with and without covariates] 
Researchers often compare TDID regressions with and without covariates, but this comparison is difficult to interpret. Differences may arise due to both estimator bias and changes in the target estimand. Using the conventional approach, adding covariates can introduce bias unless covariate distributions are balanced across groups. Moreover, if the setting satisfies Assumption~\ref{A.treatment_ass}(ii), the target estimand itself changes: without covariates, the target is $ATT(a) - ATT(b)$; with covariates and reweighting (Theorem~\ref{theorem.2}), it becomes $ATT(a) - \mathbb{E}[CATT(b,X)\mid G_i = a, E_i = 2]$.\footnote{If covariate distributions are equivalent across groups, then the bias is zero, and $\mathbb{E}[CATT(b,X)\mid G_i = a, E_i = 2] = ATT(b)$. Hence, putting aside numerical considerations, there is no difference with and without controls.}
\end{remark}

\begin{remark}[Aggregated covariates] 
In settings where units are aggregates (e.g., states or counties) and groups refer to individuals within those units, covariates often do not vary by group. For example, \textcite{goldin2013shocking} compare married and non-married women but control for state-level covariates that are constant across groups.\footnote{In the replication data and code for \textcite{goldin2013shocking}, the covariates are not group-invariant. A similar issue arises in the replication for the TDID example in \textcite{cunningham2021causal}.} When covariates do not vary across groups, they offer no identifying variation in a TDID design, making their inclusion unnecessary from an identification perspective.
\end{remark}

\printbibliography

% The appendix command is issued once, prior to all appendices, if any.
\appendix

\section{Details on Survey}\label{sec:appendix_survey}

The literature survey was conducted using Google Scholar in February 2025 to search for papers that used the TDID with controls empirical strategy. We searched for the term "triple difference" and "triple differences", both with and without a semicolon. We restricted to papers with above 100 citations, published between 2010 and 2025 in (number of surveyed papers in parentheses): American Economic Review including Insights and Papers \& Proceedings (33), Journal of Political Economy (9), Quarterly Journal of Economics (4), Econometrica (1), Review of Economic Studies (6) and Review of Economics and Statistics (13). The journals were chosen from the top journals in Google Scholar's classification based on h5-index for ''Business, Economics and Management'' journals (financial journals were excluded). 

Within each paper, we searched for the first mention of the string "triple", and considered the nearest empirical specification which the authors referred to as TDID. These specifications were manually classified as fully-interacted, differenced, or other. Specifications were classified as fully-interacted if they include a three-way interaction term. Furthermore, we manually coded whether the specification included controls variables, where double-interaction terms or fixed effects of double-interactions were not classified as controls. Similarly, variables that were part of the three-way interactions, or fixed effects thereof, were not classified as controls. If multiple versions of the same specification were reported in a table, and in at least one version controls were added, then we classify as including controls. Five papers in which the specification was not clear or there was no use of the TDID design, and yet did come up in the search, were excluded. After omitting, the number of papers included in the survey is 66.

\section{Proofs}\label{sec:appendix_proofs}

Before proving the theorems, we state and prove a useful lemma. 
\begin{lemma}\label{lemma.catt_diff}
Assume the conditions in Theorem \ref{theorem.1} hold. Then for all $x\in\mathcal{X}$, under Assumption \ref{A.treatment_ass}(ii)
$$CATT(a,x)-CATT(b,x)=D\left(a,x\right)-D\left(b,x\right)$$
and under Assumption \ref{A.treatment_ass}(i)
$$CATT(a,x)=D\left(A,x\right)-D\left(b,x\right)$$
\end{lemma}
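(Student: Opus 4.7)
The plan is to write $D(g,x)$ in terms of potential outcomes and then decompose it into a conditional treatment effect plus the conditional parallel trends bias term $\delta_{\mathrm{CPT}}(g,x)$, after which Assumption~\ref{A.same_bias} delivers the cancellation. The proof naturally splits into the two cases of Assumption~\ref{A.treatment_ass}, and in both cases uses the same three ingredients: consistency of $Y_t$ with $Y_t(W_{i,2})$, Assumption~\ref{A.no.ant.1} to replace $Y_1$ with $Y_1(0)$ in the $E=2$ stratum, and an add-and-subtract step involving $\mathbb{E}[Y_2(0)\mid G=g,E=2,X=x]$.

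First I would treat Assumption~\ref{A.treatment_ass}(ii). For any $g\in\{a,b\}$, units with $E=2$ are treated and units with $E=\infty$ are not, so consistency gives $\mathbb{E}[Y_2\mid G=g,E=2,X=x]=\mathbb{E}[Y_2(1)\mid G=g,E=2,X=x]$ and $\mathbb{E}[Y_2\mid G=g,E=\infty,X=x]=\mathbb{E}[Y_2(0)\mid G=g,E=\infty,X=x]$. Using Assumption~\ref{A.no.ant.1} to replace the two $Y_1$ conditional means with $Y_1(0)$ counterparts, and adding and subtracting $\mathbb{E}[Y_2(0)\mid G=g,E=2,X=x]$, the identity $D(g,x)=CATT(g,x)+\delta_{\mathrm{CPT}}(g,x)$ falls out directly. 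Differencing across $g=a,b$ and invoking Assumption~\ref{A.same_bias} gives $D(a,x)-D(b,x)=CATT(a,x)-CATT(b,x)$, which is the first claim.

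For Assumption~\ref{A.treatment_ass}(i) the logic for group $a$ is identical, giving $D(a,x)=CATT(a,x)+\delta_{\mathrm{CPT}}(a,x)$. The key difference is for group $b$: since no units in group $b$ are treated under case (i), we have $Y_2=Y_2(0)$ for every $b$ unit, so consistency makes the $CATT$ term vanish and only the bias survives, i.e.\ $D(b,x)=\delta_{\mathrm{CPT}}(b,x)$. Applying Assumption~\ref{A.same_bias} to equate $\delta_{\mathrm{CPT}}(a,x)$ and $\delta_{\mathrm{CPT}}(b,x)$ and rearranging yields $CATT(a,x)=D(a,x)-D(b,x)$.

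There is no real obstacle here; the only thing to be careful about is keeping the treatment assignment separate from eligibility, since under (i) $E=2$ does not imply treatment for group $b$. The whole argument is pointwise in $x$, so Assumption~\ref{A.overlap} is not used directly in this lemma (it is needed only when one later averages with respect to the conditional distributions of $X$ given group and eligibility status, as in Theorems~\ref{theorem.1} and~\ref{theorem.2}).
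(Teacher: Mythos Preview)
Your proposal is correct and follows essentially the same approach as the paper: both arguments derive the key identity $D(g,x)=CATT(g,x)+\delta_{\mathrm{CPT}}(g,x)$ for treated groups (and $D(b,x)=\delta_{\mathrm{CPT}}(b,x)$ under Assumption~\ref{A.treatment_ass}(i)) via consistency, Assumption~\ref{A.no.ant.1}, and an add-and-subtract of $\mathbb{E}[Y_2(0)\mid G=g,E=2,X=x]$, then difference across groups so that Assumption~\ref{A.same_bias} cancels the bias term. Your remark that Assumption~\ref{A.overlap} is not used in the pointwise argument is also accurate.
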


\begin{proof}[Proof of Lemma \ref{lemma.catt_diff}]
First, consider the setup under Assumption \ref{A.treatment_ass}(ii). For $g\in\{a,b\}$ Assumption \ref{A.same_bias} implies
\begin{align*}
\mathbb{E}\big[Y_{2}(0)\mid G=g,  E_{i}&=2,X=x\big]  =\mathbb{E}\big[Y_{1}(0)\mid G=g,E=2,X=x\big]\\
 & +\mathbb{E}\big[Y_{2}(0)-Y_{1}(0)\mid G=g,  E=\infty,X=x\big] + \delta_{\mathrm{CPT}}\left(x\right).
\end{align*}
From Assumption \ref{A.no.ant.1}
\begin{align*}\mathbb{E}\big[Y_{2}(0)\mid G=g,  E&=2,X=x\big]  =\mathbb{E}\big[Y_{1}\mid G=g,E=2,X_{i}=x\big]\\
 & +\mathbb{E}\big[Y_{2}-Y_{1}\mid G=g,  E=\infty,X=x\big] + \delta_{\mathrm{CPT}}\left(x\right).
\end{align*}
From consistency 
$$\mathbb{E}\big[Y_{2}(1)\mid G=g,E=2,X=x\big]=\mathbb{E}\big[Y_{2}\mid G=g,E=2,X=x\big].$$
Substituting into $CATT\left(g,x\right)$ implies
\begin{equation}\label{eq:lemma.1}
CATT\left(g,x\right) = D\left(g,x\right)-\delta_{\mathrm{CPT}}\left(x\right).
\end{equation}
Differencing \eqref{eq:lemma.1} when setting $g=a$ from \eqref{eq:lemma.1} when setting $g=b$, the terms $\delta_{\mathrm{CPT}}\left(x\right)$ cancel out, and the statement of the lemma is achieved.

Next, consider the setup under Assumption \ref{A.treatment_ass}(i). Similar arguments based on Assumption \ref{A.no.ant.1}, Assumption \ref{A.same_bias} and consistency imply
\begin{equation}\label{eq:lemma.2}
D\left(b,x\right)=\delta_{\mathrm{CPT}}\left(x\right).
\end{equation}
The statement in the lemma is achieved by summing \eqref{eq:lemma.1} when setting $g=a$ and \eqref{eq:lemma.2}. \qed
\end{proof}
\vspace{10pt}
\begin{proof}[Proof of Theorem \ref{theorem.1}]
Consider the setup under Assumption \ref{A.treatment_ass}(ii). Taking expectations conditional on $G_i=g$ and $E_i=2$ over both sides of \eqref{eq:lemma.1}, the law of iterated expectation (LIE) and linearity of expectations imply
\begin{equation}\label{eq:thereom1.1}
ATT\left(g\right) =\mathbb{E}\left[D\left(g,X\right)\mid G=g,E=2\right] -\mathbb{E}\left[\delta_{\mathrm{CPT}}\left(X\right)\mid G=g,E=2\right]. 
\end{equation}
Differencing \eqref{eq:thereom1.1} when setting $g=a$ from \eqref{eq:thereom1.1} when setting $g=b$ we obtain the statement in the theorem. Similarly, the statement under Assumption \ref{A.treatment_ass}(i) is achieved by taking expectations conditional on $G_i=b$ and $E_i=2$ over \eqref{eq:lemma.2}, and differencing from \eqref{eq:thereom1.1} when setting $g=a$.\qed
\end{proof}
\vspace{10pt}
\begin{proof}[Proof of Theorem \ref{theorem.2}]
Consider the setup under Assumption \ref{A.treatment_ass}(ii). Setting $g=b$ in \eqref{eq:lemma.1}, and taking conditional expectations over $G_i=a$ and $E_i=2$ implies
\begin{align}\label{eq:thereom2.1}
\mathbb{E}\left[CATT\left(b,X\right)\mid G=a,E=2\right] & =\mathbb{E}\left[D\left(b,X_i\right)\mid G=a,E=2\right] \nonumber\\
 & -\mathbb{E}\left[\delta_{\mathrm{CPT}}\left(X\right)\mid G=a,E=2\right].
\end{align}
Differencing from  \eqref{eq:thereom1.1} when setting $g=a$ from \eqref{eq:thereom2.1} we obtain the statement in the theorem.
Next, consider the setup under Assumption \ref{A.treatment_ass}(i). Taking expectations conditional on $G_i=a$ and $E_i=2$ over \eqref{eq:lemma.2}, and differencing from \eqref{eq:thereom1.1} when setting $g=a$, we obtain the statement in the theorem.\qed
\end{proof}
\vspace{10pt}
\noindent

\begin{proof}[Proof of Lemma \ref{lemma.wdr}]
Let 
$\Psi(e) = \mathbb{E}\left[\mathbb{E}\left[Y_{2}-Y_{1}\mid G=b,E=e,X=x\right]\mid G=a,E=2\right].$
The expression on the left-hand-side in Lemma \ref{lemma.wdr} can be re-written $\Psi(2)-\Psi(\infty)$. Below we establish general statements on $\Psi(e)$, which can then be applied to this difference. First, we obtain 
\begin{align*}
 \Psi(e)& = \mathbb{E}\left[\mathbb{E}\left[Y_{2}-Y_{1}\mid G=b,E=e,X=x\right]\mid G=a,E=2\right]\\
 & =\mathbb{E}\left[\frac{1_{\left\{ G=a,E=2\right\} }}{\mathbb{E}\left[1_{\left\{ G=a,E=2\right\} }\right]}\mathbb{E}\left[Y_{2}-Y_{1}\mid G=b,E=e,X=x\right]\right] \\
 & = \mathbb{E}\left[w_{\mathrm{T}}\left(a,2\right)m\left(b,e,X\right)\right].
\end{align*}
This shows the equality for $\mathbb{E}[\psi_{WOR}]$. Next, 
\begin{align}\label{eq:lemma2.1}
 \Psi(e)& = \mathbb{E}\left[\frac{1_{\left\{ G=a,E=2\right\} }}{\mathbb{E}\left[1_{\left\{ G=a,E=2\right\} }\right]}\mathbb{E}\left[Y_{2}-Y_{1}\mid G=b,E=e,X=x\right]\right]\nonumber\\
 & =\mathbb{E}\left[\frac{\mathbb{E}\left[1_{\left\{ G=a,E=2\right\} }\mid X=x\right]}{\mathbb{E}\left[1_{\left\{ G=a,E=2\right\} }\right]}\mathbb{E}\left[Y_{2}-Y_{1}\mid G=b,E=e,X=x\right]\right]\nonumber\\
 & =\mathbb{E}\left[\frac{1_{\left\{ G=b,E=e\right\} }}{\mathbb{E}\left[1_{\left\{ G=a,E=2\right\} }\right]}\frac{\mathbb{E}\left[1_{\left\{ G=a,E=2\right\} }\mid X=x\right]}{\mathbb{E}\left[1_{\left\{ G=b,E=e\right\} }\mid X=x\right]}\mathbb{E}\left[Y_{2}-Y_{1}\mid X=x\right]\right]\nonumber\\
 & =\mathbb{E}\left[\frac{1_{\left\{ G=b,E=e\right\} }}{\mathbb{E}\left[1_{\left\{ G=a,E=2\right\} }\right]}\frac{p\left(a,2,X\right)}{p\left(b,e,X\right)}\mathbb{E}\left[Y_{2}-Y_{1}\mid X=x\right]\right]\nonumber\\
 & =\mathbb{E}\left[\frac{1_{\left\{ G=b,E=e\right\} }}{\mathbb{E}\left[1_{\left\{ G=a,E=2\right\} }\right]}\frac{p\left(a,2,X\right)}{p\left(b,e,X\right)}\left(Y_{2}-Y_{1}\right)\right] \nonumber\\
 & =\mathbb{E}\left[w_{\mathrm{C}}\left(a,b,2,e,X\right)\left(Y_{2}-Y_{1}\right)\right].
\end{align}
This shows the equality for $\mathbb{E}[\psi_{WIPW}]$. The above arguments also imply 
$$\mathbb{E}\left[\left(w_{\mathrm{T}}\left(a,2\right)-w_{\mathrm{C}}\left(a,b,2,e,X\right)\right)m\left(b,e,X\right)\right]=0.$$
Hence, adding zero to \eqref{eq:lemma2.1} we obtain 
\begin{align}\label{eq:lemma2.3}
 \Psi(e)& =\mathbb{E}\left[w_{\mathrm{C}}\left(a,b,2,e,X\right)\left(Y_{2}-Y_{1}\right)\right]\nonumber\\
 & +\mathbb{E}\left[\left(w_{\mathrm{T}}\left(a,2\right)-w_{\mathrm{C}}\left(a,b,2,e,X\right)\right)m\left(b,e,X\right)\right].
\end{align}
This shows the equality for $\mathbb{E}[\psi_{WDR}]$. \qed
\end{proof}

\section{An Example}\label{sec:example}

To illustrate the results in Theorems 1 and 2 in the main text, we present a toy analytical example. Assume a  setup with panel data with two time periods $t\in\{1,2\}$, where units randomly belong to group A or B. Units draw a time-invariant covariate $X_i$ according to $X_i\sim N(\mu_g,1)$, with group-specific means $\mu_a = 1$ and $\mu_b=3$.
Eligibility in period 2 is assigned randomly to units from both groups. The treatment assignment follows Assumption 1(ii), as in, eligible units are treated in both groups. Untreated potential outcomes evolve according to 
\begin{equation*}
    Y_{i,1}(0)=X_{i}+\varepsilon_{i,1}, \quad\quad Y_{i,2}(0)=X_{i}+W_{i,2}X_{i}+\varepsilon_{i,2}, \quad\quad \varepsilon_{i,1},\varepsilon_{i,2}\sim N(0,1).
\end{equation*}

\subsection{Group invariant biases in conditional parallel trends} 

We begin by showing that the above setup satisfies Assumption 3. Recall that conditional parallel trends for group $g$ is equal to 
\begin{align*}
\delta_{\mathrm{CPT}}\left(g,x\right) & =\mathbb{E}\left[Y_{2}\left(0\right)-Y_{1}\left(0\right)\mid G=g,E=2,X=x\right]\\
 & -\mathbb{E}\left[Y_{2}\left(0\right)-Y_{1}\left(0\right)\mid G=g,E=\infty,X=x\right].
\end{align*}
Substituting the definitions in the setup, we obtain 
\begin{align*}
\delta_{\mathrm{CPT}}\left(g,x\right) & =\mathbb{E}\left[\left(X+W_{2}X+\varepsilon_{2}\right)-\left(X+\varepsilon_{1}\right)\mid G=g,E=2,X=x\right]\\
 & -\mathbb{E}\left[\left(X+W_{2}X+\varepsilon_{2}\right)-\left(X+\varepsilon_{1}\right)\mid G=g,E=\infty,X=x\right],
\end{align*}
which simplifies to $\delta_{\mathrm{CPT}}\left(g,x\right)=x$, since non-eligible units have $W_{i,2}=0$ and the errors are mean zero. Hence, conditional parallel trends are violated, but the bias is group-invariant, satisfying Assumption 3.

\subsection{Case 1. Constant treatment effects}

Assume that $Y_{i,t}(1)=Y_{i,t}(0)+\beta(G_i)$ where $\beta(a)=4$ and $\beta(b)=1$. Also, assume observed outcomes are given by $Y_{i,t}=W_{i,t}Y_{i,t}(1)+(1-W_{i,t})Y_{i,t}(0)$. 

In the above setup, we can analytically calculate the causal and descriptive estimands of interest. First, since the treatment effect is not heterogeneous by $X$, $CATT\left(g,x\right)=ATT(g)=\beta(g)$. This implies that $ATT\left(a\right)=4$ and $ATT\left(b\right)=1$, and hence the difference in ATTs is equal to three. 

Next, turn to the group-specific conditional DID. Substituting observed and potential outcomes, and simplifying, we obtain 
$D_{\mathrm{CDID}}\left(g,x\right)=\beta\left(g\right)+x$. 
Taking expectations over $X$ conditional on group and eligibility we have 
\begin{align*}
\mathbb{E}\left[D_{\mathrm{CDID}}\left(a,x\right)\mid G=a,E=2\right] & =\beta\left(a\right)+\mathbb{E}\left[X\mid G=a,E=2\right]=4+1=5,\\
\mathbb{E}\left[D_{\mathrm{CDID}}\left(b,x\right)\mid G=b,E=2\right] & =\beta\left(b\right)+\mathbb{E}\left[X\mid G=b,E=2\right]=1+3=4.
\end{align*}

Differencing the expectation for A from the expectation from B we obtain  one. Recall that the difference in ATT of A from the ATT of B is three. This discrepancy arises from the aggregate bias in conditional parallel trends, equal to one for group A and three for group B, resulting in a difference of minus two. This shows that (i) the DID is biased for the ATT of each group, due to the violation of conditional parallel trends, and (ii) the difference in DID between groups is biased for the difference in ATTs between groups, due to the differences in the covariate distribution of the groups (Theorem 1). 

Finally, the expectation of the conditional DID for group B, conditional on group A, is 
$$\mathbb{E}\left[D_{\mathrm{CDID}}\left(b,x\right)\mid G=a,E=2\right]=\beta\left(b\right)+\mathbb{E}\left[X\mid G=a,E=2\right]=1+1=2$$
and hence
$$\mathbb{E}\left[D_{\mathrm{CDID}}\left(a,X\right)\mid G=a,E=2\right]-\mathbb{E}\left[D_{\mathrm{CDID}}\left(b,X\right)\mid G=a,E=2\right]=5-2=3.$$
Under treatment effect homogeneity, this identifies the difference in ATTs (Theorem 2). In the next subsection, we turn to a case of heterogeneous treatment effects. 

\subsection{Case 2. Heterogeneous treatment effects}

Now, assume that $Y_{i,t}(1)=Y_{i,t}(0)+\beta(G_i,X_i)$ where $\beta(a,x)=4x$ and $\beta(b,x)=x$.
Following the steps in the previous case, we obtain $CATT\left(g,x\right)=\beta(g,x)$. Taking expectations over the CATT by group, $ATT(a)=4$ and $ATT(b)=3$, and the difference in ATTs is equal to one. 
Next, the conditional DID of each group in this case is equal to $D_{\mathrm{CDID}}\left(g,x\right)=\beta\left(g,x\right)+x$. 
Taking expectations 
\begin{align*}
\mathbb{E}\left[D_{\mathrm{CDID}}\left(a,x\right)\mid G=a,E=2\right] & =\mathbb{E}\left[5X\mid G=a,E=2\right]=5,\\
\mathbb{E}\left[D_{\mathrm{CDID}}\left(b,x\right)\mid G=b,E=2\right] & =\mathbb{E}\left[2X\mid G=b,E=2\right]=6.
\end{align*}
Hence, the difference in expected values of conditional DID by group is minus one. As expected, this is not equal to the difference in ATTs, due to the difference in the distribution of $X$. Finally, 
$$\mathbb{E}\left[D_{\mathrm{CDID}}\left(b,x\right)\mid G=a,E=2\right]=\mathbb{E}\left[2X\mid G=a,E=2\right]=2$$
and hence
$$\mathbb{E}\left[D_{\mathrm{CDID}}\left(a,X\right)\mid G=a,E=2\right]-\mathbb{E}\left[D_{\mathrm{CDID}}\left(b,X\right)\mid G=a,E=2\right]=5-2=3.$$
This is not equal to the difference in ATTs, which, recall, equals one. However, we can calculate the identifiable estimand. Noting that $$\mathbb{E}\left[CATT(b,X)\mid G=a,E=2\right]=\mathbb{E}\left[X_i\mid G=a,E=2\right]=1$$
we obtain
$\mathbb{E}\left[CATT(a,X)-CATT(b,X)\mid G=a,E=2\right]=4-1=3$.
This confirms that the expected difference in conditional DIDs evaluated under group A’s covariate distribution, is equal to the expected difference in CATTs, also evaluated under the covariate distribution of group A (Theorem 2).

\section{Simulation Study}\label{sec:simulation}

We illustrate the performance of the proposed estimators using a Monte-Carlo simulation study. Assume the data generating process (DGP) is equivalent to the setup in the example (Appendix \ref{sec:example} Case 2). From the DGP we draw $2,000$ samples of size $n=2,000$. In each sample we estimate (i) the DR estimators of both groups separately, and (ii) the WDR estimator of group B. Then we construct the (wrong) difference of the DR estimators of groups A and B (Theorem 1), and the (correct) difference of the DR estimator of group A from the WDR estimator of group B (Theorem 2). Propensity scores were modeled using logistic regressions and outcome regressions were modeled using ordinary least squares (OLS), without cross-fitting. 

Figure \ref{fig:1} presents the empirical distribution of the estimators. The results are an empirical equivalent of the analytical results we derived in Appendix \ref{sec:example}: the difference between DR estimators has a mean of minus one, while the difference between the DR estimator of group A and the WDR estimator of group B has a mean of three. Recall that the difference in ATTs between the groups was one, and the difference in the ATT of group A and the expected value of the CATT of group B according to the covariate distribution of group A was three. 

\begin{figure}[t]
    \centering
    \caption{Simulation Results}
    \label{fig:1}
    \includegraphics[width=\textwidth]{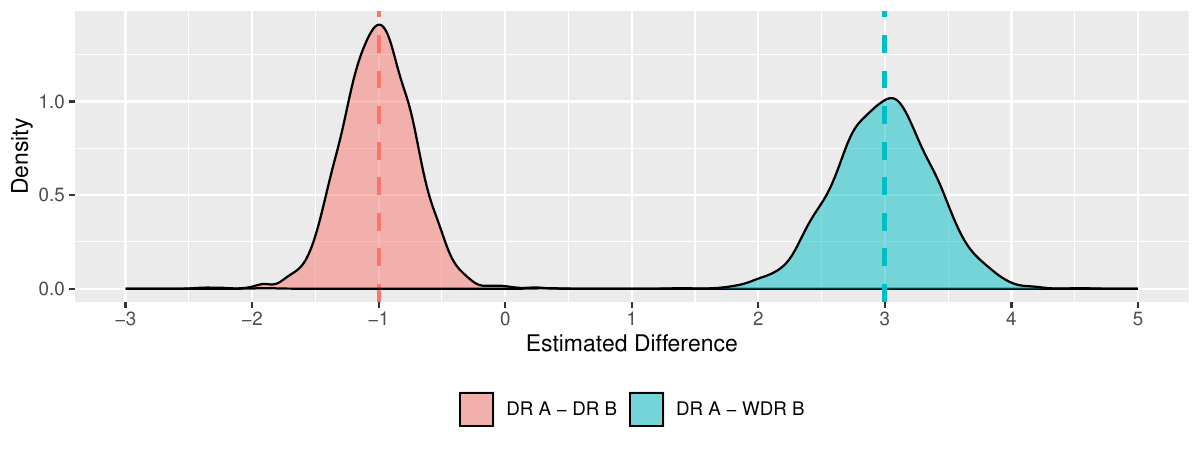}
    \caption*{\footnotesize \textit{Notes:}
    The figure presents empirical distributions of estimators from a Monte-Carlo simulation study. The DGP is based on the example in Appendix \ref{sec:example}.  "DR A - DR B" represents an estimator which takes the difference between the DR estimator of group A and the DR estimator of group B. "DR A - WDR B" represents the proposed estimator in Section III in the main text. The distribution is based on 2,000 samples of sample size 2,000.
    }
\end{figure}

\end{document}